\newtheorem{thm}{Theorem}
\newtheorem{lemma}[thm]{Lemma}
\theoremstyle{definition}
\theoremstyle{remark}
\newcommand*{\set}[1]{\left\{#1\right\}}
\newcommand*{\x}{X}
\newcommand*{\y}{Y}
\title{Greater search cost reduces prices} 
\author{Sander Heinsalu\thanks{Research School of Economics, Australian National University, 
25a Kingsley St, Acton ACT 2601, Australia.
Email: sander.heinsalu@anu.edu.au, 
website: \url{https://sanderheinsalu.com/}
The author thanks the audience at the MIT theory lunch for comments and suggestions. The hospitality of MIT during part of this work is greatly appreciated. 
}}
\date{\today}
\begin{document}
\maketitle

\begin{abstract}
The optimal price of each firm falls in the search cost of consumers, in the limit to the monopoly price, despite the exit of lower-value consumers in response to costlier search. Exit means that fewer inframarginal consumers remain. The decrease in marginal buyers is smaller, because part of demand is composed of customers coming from rival firms. These buyers can be held up and are not marginal. Higher search cost reduces the fraction of incoming switchers among buyers, which decreases the hold-up motive, thus the price. 




%

Keywords: Search cost, hold-up, imperfect information, price competition. 
	
JEL classification: D82, C72, D43. 
\end{abstract}


Industry associations often provide a public member directory\footnote{
Grocers: \url{http://www.agbr.com/store-locator/}, notaries: \url{https://www.thenotariessociety.org.uk/notary-search}, restoration contractors: \url{https://www.iicrc.org/page/IICRCGlobalLocator}.
} that reduces customers' search costs and informs them about the value provided by each firm, but does not reveal prices. This suggests that firms in the industry benefit from cheaper search by enough to justify maintaining and updating the searchable directory on the web. 
Easier comparison of firms seemingly increases competition and reduces profits---an intuition confirmed by most industrial organisation models. 
By contrast, this paper shows how differentiated firms can in fact increase profits by reducing search costs and making the surplus offered more comparable across firms. 

The economic forces can be elucidated in a simple model of a duopoly in price competition. Firms simultaneously set prices. Consumers privately know their valuations for both firms, which may be determined by geographic location or the preferred times for a service. Initially, each consumer is familiar with one of the firms, in the sense of observing its price. Consumers may buy immediately from their `home' firm, exit, or learn the price of the competitor. After learning, consumers may buy from either firm or exit. Firms cannot distinguish customers who buy immediately from those who first learn and then decide.\footnote{
Websites try to track buyers' browsing history to segment them into switchers and captive customers, but buyers may take countermeasures (using a VPN, the Tor browser, or searching on different devices). The segmentation may be illegal or create negative publicity, making it not worthwhile. 
} 

The consumers who learn can be held up, because their willingness to pay the search cost implies that their valuation for the firm they arrive at is above its equilibrium price. The hold-up motive increases prices. Greater search cost weakens the hold-up motive, because fewer consumers search. When a smaller fraction of a firm's demand is composed of customers switching from the competitor, the hold-up motive and price are lower. 

Another intuition is that higher search costs cause some consumers to exit who previously would have switched firms. More exit leaves fewer inframarginal consumers to firms on average, so some firm's inframarginal demand falls. If the firms are symmetric enough, then each of them receives fewer inframarginal buyers. The number of marginal consumers may rise or fall in the search cost, but this change is smaller. As the ratio of inframarginal to marginal consumers falls, so does the optimal price. The prices of the firms are strategic complements, so one firm's price cut motivates others to follow suit. 

The literature on search costs and pricing is large and mostly finds that prices increase in the search cost, as in the seminal work of \cite{diamond1971}. Exceptions assume either multiproduct or multiperiod markets or add a countervailing force (higher switching cost or firm obfuscation) to a lower search cost. The current work presents a simpler one-shot, one-product framework, with a different driving force (reduced hold-up) and a stronger result---prices strictly decrease in the search cost for any positive search cost at which some consumers still switch firms. 

In \cite{zhou2014}, multiproduct search makes products complements: a price cut on one increases demand for both, more so at a greater search cost. Thus prices may fall in the search cost. 
\cite{rhodes+zhou2019} extend this result to four firms who supply two products and may merge pairwise into two-product firms. 
Higher search cost may cause mergers, which may reduce prices due to the complementarity mechanism of \cite{zhou2014}. 

\cite{klemperer1987,klemperer1995} points out that if consumers have a switching cost after their first purchase, but not initially, then higher switching costs may reduce prices even below cost in the first period. The reason is that firms compete to lock in customers to later charge the monopoly price. The second-period prices weakly increase in the switching cost. 
 
\cite{dube+2009} show numerically and \cite{cabral2009} analytically that intermediate switching cost leads to lower prices than zero switching cost in an infinite horizon model. The incentive to cut price to `invest' in customer acquisition outweighs the incentive to `harvest' with a high price. However, for large enough switching costs, prices rise. 
\cite{cabral2016} extends these results to show that if trades have high frequency or the market structure is close to symmetric duopoly, then switching costs increase competition, but with infrequent trade or sufficiently asymmetric competitors, switching costs decrease competition. 

\cite{lal+sarvary1999} model adding a web shop to a physical store. They assume that the web shop reduces search costs but increases switching costs, because it is easy to re-order a familiar brand. This may raise prices and reduce search, driven by the higher switching cost, which outweighs the lower search cost. 

In \cite{ellison+wolitzky2012}, firms obfuscate to increase consumers' search cost. With costless obfuscation, firms exactly offset a fall in the exogenous search cost, so it does not affect prices. Thus it may be said that prices weakly increase in the search cost. 

The next section introduces the framework and derives the main result. Extensions and generalisations are discussed in Section~\ref{sec:extensions}, followed by the conclusion in Section~\ref{sec:discussion}.

\section{Horizontally differentiated duopoly}
\label{sec:model}

Two firms $i\in\set{\x,\y}$ simultaneously set prices $P_i$. 
There is a mass $1$ of consumers indexed by $v=(v_{\x},v_{\y})$, where $v_{i}\in[0,1]$ is the consumer's valuation for firm $i$'s product. Consumers privately know their valuations. Firms only have the common prior belief that $v_i$ is distributed according to the pdf $f_{i}$, which is positive with interval support. 
The corresponding cdf is denoted $F_i$. 

Independently of $v$, fraction $\mu_{\x}$ of consumers initially observe $P_{\x}$, and fraction $\mu_{\y} =1-\mu_{\x}$ observe $P_{\y}$. Call the firm whose price a consumer initially observes the \emph{initial firm} of the consumer. 

Each consumer decides whether to buy from her initial firm, learn the price of the other firm at cost $s>0$\footnote{
Zero search cost is qualitatively different (Bertrand competition). Section~\ref{sec:extensions} discusses $s=0$.  
} 
or exit. After learning, the consumer decides whether to buy from firm $\x$, firm $\y$ or exit. 

The payoff from not buying is normalised to zero. 
A consumer with valuation $v$ who buys from firm $i$ at price $P_i$ without searching obtains payoff $v_i-P_i$, but after search, obtains $v_i-P_i-s$ if buys and $-s$ if exits. 
Firm $i$ that sets price $P_i$ resulting in \emph{ex post} demand $D_i$ gets \emph{ex post} profit $\pi_i:=(P_i-c_i)D_i$, with $c_i<1$. 
W.l.o.g.\ restrict $P_i\in\left[c_i,1\right]$, because pricing below cost or above the maximal valuation of consumers is never a unique best response. 
A mixed strategy of firm $i$ is the cdf $\sigma_i$ on $[c_i,1]$. 

Equilibrium consists of firms' pricing strategies and consumer decisions such that (i) each firm maximises profits given the decisions it expects of the consumers and the rival firm, (ii) consumers choose to buy, learn or exit based on the prices they see and expect, and if they learn, then choose based on the prices they see which firm, if any, to buy from to maximise their expected payoff and (iii) the expectations of the firms and consumers are correct. 

The next section first finds the optimal decisions of consumers, which determine the demands for the firms. Then the profit-maximising prices are calculated, followed the main comparative static of prices decreasing in the search cost.

\section{Demand, profit and comparative statics}
\label{sec:results}

To solve the pricing game, start with the decisions of the consumers. These determine the demands for the firms, which are then used to find the optimal prices. 

Consumer $v$ who observes firm $j$'s price $P_j$ and expects firm $i$ to choose the pricing strategy $\sigma_i^{\mathbb{E}}$ learns $P_i$ if 
\begin{align}
\label{learn}
\int_{c_i}^{1}\max\set{0,v_{i}-P_{i}^{\mathbb{E}},v_{j}-P_{j}}d\sigma_i^{\mathbb{E}}(P_i^{\mathbb{E}})-s \geq \max\set{0,v_{j}-P_{j}}. 
\end{align} 
The right-hand side (RHS) of~(\ref{learn}) is the value of not learning---either choosing to exit (payoff zero) or to buy immediately at price $P_j$. The left-hand side (LHS) is the benefit of learning minus the search cost $s$. The benefit (the integral) reflects the options of being able to exit, buy from firm $i$ or buy from firm $j$ after learning. The consumer chooses the best of these options, thus the $\max$, and before learning, forms an expectation of the best of these options based on the pricing strategy of firm $i$ (integrates over $P_i^{\mathbb{E}}$ with respect to $\sigma_i^{\mathbb{E}}$). 

The \emph{certainty equivalent price} 
\begin{align}
\label{pice}
P_{iCE} =\int_{c_i}^{v_{i}-\max\set{0,v_{j}-P_j}} P_{i}^{\mathbb{E}} d\sigma_i^{\mathbb{E}}(P_i^{\mathbb{E}})
\end{align} 
for the learning decision of a consumer who faces $P_j$ and expects $\sigma_i^{\mathbb{E}}$ is the pure price of firm $i$ that creates the same benefit of learning as $\sigma_i^{\mathbb{E}}$. Formally, $P_{iCE}$ solves 
$\max\set{v_{i}-P_{iCE},0,v_{j}-P_j} =\int_{c_i}^1\max\set{v_{i}-P_{i}^{\mathbb{E}},0,v_{j}-P_j}d\sigma_i^{\mathbb{E}}(P_i^{\mathbb{E}})$. If $\sigma_i^{\mathbb{E}}$ is the pure $P_i^*$, then of course $P_{iCE}=P^*$. 
The interpretation of~(\ref{pice}) is that a consumer cares about firm $i$'s expected price conditional on accepting it. The consumer accepts it if it is below the consumer's value $v_i$ for firm $i$ by at least the net benefit (value minus price) of buying from $j$. 


The demand for a firm consists of consumers initially at that firm who either buy immediately or learn and then buy from that firm, and consumers initially at the rival firm who learn and switch. 
Figure~\ref{fig:D} depicts demands for each firm from customers initially at each firm (left panel: buyers initially at firm $\x$, right panel: $\y$). The marginal customers for firm $\y$ are marked as the thick blue line and the marginal buyers for $\x$ as the thick orange line. 
\begin{figure}
\caption{Demands at the pure prices $P_{\x}=P_{\x}^*=P_{\x CE}=0.6$ and $P_{\y}=P_{\y}^*=P_{\y CE}=0.45$ and search cost $s=0.1$. Left panel: consumers initially at firm $\x$, right panel: $\y$.}
\label{fig:D}
\includegraphics[width=\linewidth]{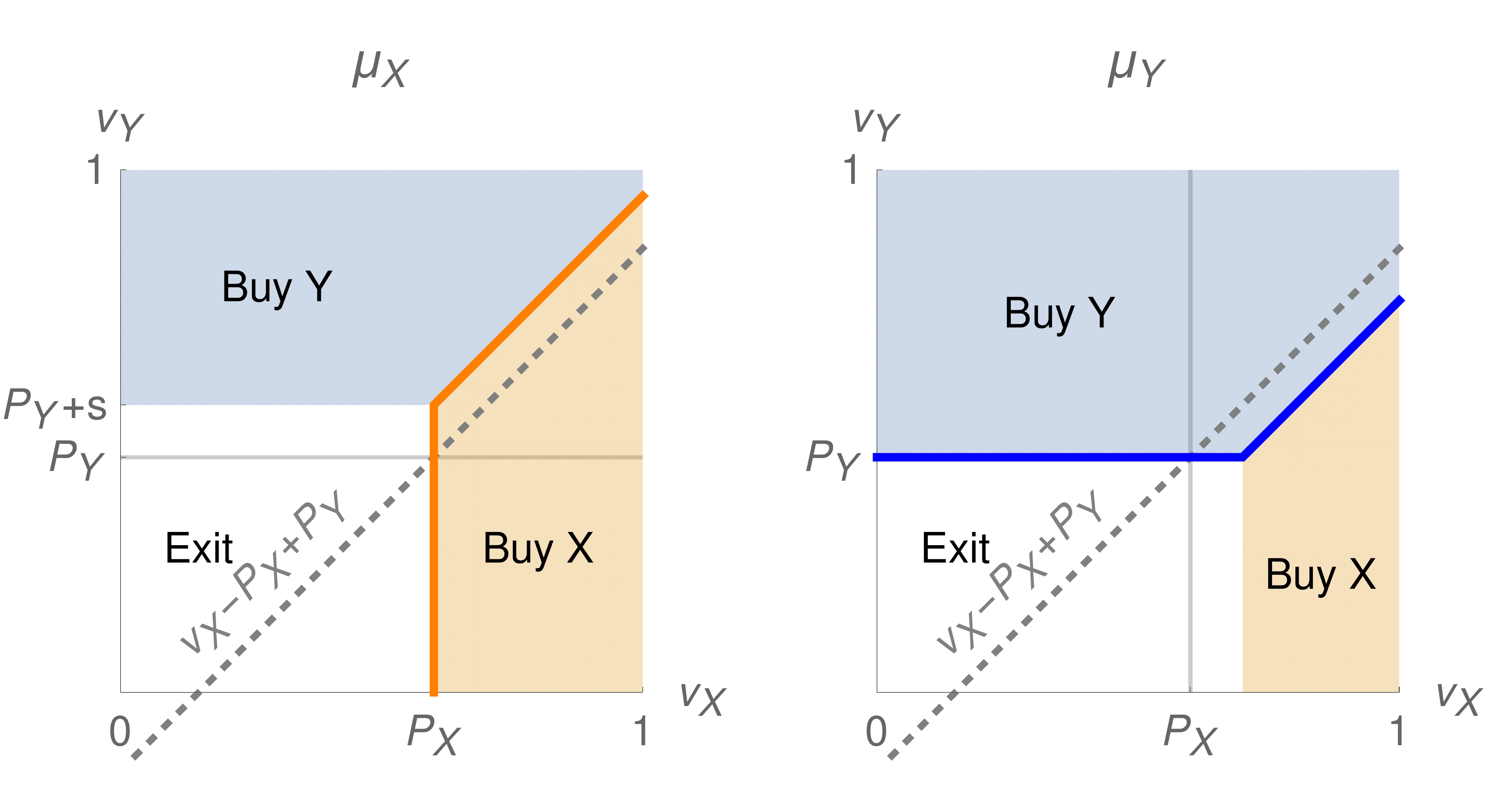} 
\end{figure}

The valuations of customers initially at $i$ who eventually buy from $i$ either motivate them to buy immediately (when $v_{i}-P_{i} \geq \max\set{0,v_{j}-P_{jCE}-s}$) or make it optimal to learn and then still buy from $i$ (when $v_{j}-P_{jCE}-s \geq \max\set{0,v_{i}-P_{i}}$ and $v_i-P_i\geq \max\set{0,v_j-P_j}$). 
These conditions combine to 
$v_{i} \geq P_i +\max\set{0,v_{j}-\max\set{P_{jCE}+s,P_j}}$. 

Consumers starting at $j$ buy from $i$ if their expected benefit from learning and the observed benefit from $i$ after learning are large enough. Formally, $v_{i}-P_{iCE}-s \geq \max\set{0,v_{j}-P_{j}}$ and $v_i-P_i\geq \max\set{0,v_j-P_j}$, which are equivalent to $v_{i} \geq \max\set{P_i,P_{iCE}+s}+\max\set{0,v_{j}-P_{j}}$.

The demand that firm $i$ expects from price $P_i$ when it expects firm $j$ to choose pricing strategy $\sigma_j^*$ and consumers initially at $j$ ($i$) to expect $\sigma_i^{\mathbb{E}}$ ($\sigma_j^{\mathbb{E}}$) 
is 
\begin{align}
\label{D}
&\notag D_{i}(P_i,\sigma_j^*,\sigma_i^{\mathbb{E}},\sigma_j^{\mathbb{E}}) 
=\mu_{i}\int_{c_j}^{1}\int_0^1\int_{P_i +\max\set{0,v_{j}-\max\set{P_{jCE}+s,P_j^*}}}^1f_i(v_{i})f_j(v_{j})dv_idv_jd\sigma_j^*(P_j^*) 
\\& +\mu_{j}\int_{c_j}^{1}\int_0^1\int_{\max\set{P_i,P_{iCE}+s}+\max\set{0,v_{j}-P_{j}^*}}^1f_i(v_{i})f_j(v_{j})dv_idv_jd\sigma_j^*(P_j^*).
\end{align} 
The inner two integrals in the demand aggregate the consumers initially at each firm over the region of valuations that result in these consumers eventually buying from $i$, given the prices. The outer integral in~(\ref{D}) reflects firm $i$'s expectation over the prices of firm $j$. 

Having derived the demand for each firm, the next preliminary lemma establishes pure best responses of the firms and the strategic complementarity of prices. 
A sufficient condition is that the densities of consumer valuations do not decrease too fast, which is satisfied by the uniform distribution and many others. The formal condition bounds below the elasticity of the pdf of valuations. 
\begin{lemma}
\label{lem:pure}
If $(P_i-c_i)\frac{\partial f_i(P_i+w)}{\partial P_i}\geq - f_i(P_i+w)$ for all $w\in[0,1]$ for each firm $i$, then each has a pure best response to any $\sigma_j^*,\sigma_i^{\mathbb{E}},\sigma_j^{\mathbb{E}}$, and prices are strategic complements. 
\end{lemma}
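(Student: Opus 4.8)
The plan is to show that the stated hypothesis makes firm $i$'s profit strictly concave in its own price $P_i$ for every deterministic realisation of the rival's price, and that it has increasing differences in $P_i$ and in the rival's price; both properties are inherited by the average against the belief $\sigma_j^*$, which then gives a unique pure best response and, by a standard monotone comparative statics argument, a best response that is nondecreasing in the first-order stochastic order of $\sigma_j^*$ --- i.e.\ prices are strategic complements.

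First I would rewrite demand. Fix a realisation $P_j^*$ of the rival price together with $\sigma_i^{\mathbb{E}},\sigma_j^{\mathbb{E}}$, and write the corresponding conditional demand from~(\ref{D}) as $\tilde D_i(P_i;P_j^*)=\mu_i\int_0^1\bigl(1-F_i(t_1(v_j))\bigr)f_j(v_j)\,dv_j+\mu_j\int_0^1\bigl(1-F_i(t_2(v_j))\bigr)f_j(v_j)\,dv_j$, where $t_1,t_2$ are the marginal valuations read off the threshold conditions derived above. The structural claim I need is that, for each fixed $v_j$, the map $P_i\mapsto t_1(v_j)$ is affine with slope $1$, and $P_i\mapsto t_2(v_j)$ is continuous and nondecreasing with slope $0$ for $P_i$ below a cutoff $\hat P$ that does not depend on $v_j$ or $P_j^*$ and slope $1$ above it, and that on every unit-slope portion $t_k=P_i+w$ with $w=w(v_j,P_j^*)\in[0,1]$. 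The content of this is that the certainty-equivalent prices $P_{iCE},P_{jCE}$ from~(\ref{pice}), although in general they depend both on a consumer's own valuation and on $P_i$, become independent of $P_i$ once they are evaluated along the relevant purchasing or search margin: the marginal consumer's valuation $v_i$ moves one-for-one with $P_i$, and in~(\ref{pice}) this cancels the $P_i$ inside the upper limit, so the marginal certainty equivalent solves a $P_i$-free fixed-point equation. I expect this step --- identifying the marginal consumers, checking that ``eventually buys from $i$'' is an upper interval in $v_i$ so that a single threshold $t_k(v_j)$ is well defined, and verifying the fixed-point reduction --- to be the main obstacle; what follows is routine.

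Granting the claim, on a unit-slope portion one has $\partial_{P_i}\bigl((P_i-c_i)(1-F_i(P_i+w))\bigr)=\bigl(1-F_i(P_i+w)\bigr)-(P_i-c_i)f_i(P_i+w)$ and hence $\partial_{P_i}^2\bigl((P_i-c_i)(1-F_i(P_i+w))\bigr)=-f_i(P_i+w)-\bigl(f_i(P_i+w)+(P_i-c_i)f_i'(P_i+w)\bigr)\le-f_i(P_i+w)\le0$, where the first inequality is exactly the stated hypothesis at this $w\in[0,1]$; on the zero-slope portion the integrand is affine in $P_i$; and at $P_i=\hat P$ the one-sided derivative drops by $(\hat P-c_i)f_i(\hat P+w)\ge0$. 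So each of the two summands of $(P_i-c_i)\tilde D_i$ is concave in $P_i$, and strict concavity of the home summand --- hence of $\pi_i$ --- follows because $f_i$ is positive on an interval. Averaging against $\sigma_j^*$ preserves strict concavity, so $\pi_i$ is strictly concave on $[c_i,1]$ and has a unique maximiser, the asserted pure best response.

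For strategic complementarity I would check that $\partial_{P_i}\bigl((P_i-c_i)\tilde D_i\bigr)$ is nondecreasing in $P_j^*$. Each threshold is nonincreasing in $P_j^*$ (a pricier rival only makes firm $i$ more attractive), with $\partial_{P_j^*}t_k\in\set{-1,0}$, and the cutoff $\hat P$ does not move with $P_j^*$; a direct computation then collapses $\partial_{P_j^*}\partial_{P_i}$ of each summand of $(P_i-c_i)\tilde D_i$ to $\int_0^1\lvert\partial_{P_j^*}t_k\rvert\bigl(f_i(t_k)+(P_i-c_i)f_i'(t_k)\bigr)f_j(v_j)\,dv_j\ge0$ on the unit-slope portion --- again precisely the hypothesis --- and to $\int_0^1 f_i(\hat P+w)\lvert\partial_{P_j^*}w\rvert f_j(v_j)\,dv_j\ge0$ on the zero-slope portion. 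Averaging against $\sigma_j^*$ preserves the sign, so $\partial_{P_i}\pi_i$ is nondecreasing in the first-order stochastic order of $\sigma_j^*$; together with uniqueness of the maximiser this makes firm $i$'s optimal price nondecreasing in that order. Running the same argument for firm $\y$ completes the proof.
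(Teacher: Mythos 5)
Your proposal is correct and takes essentially the same route as the paper: both arguments sign the own second derivative of profit via $(P_i-c_i)f_i'\geq -2f_i$ (implied by your hypothesis) and the cross-partial via $(P_i-c_i)f_i'\geq -f_i$, handling the kink at $P_{iCE}+s$ through the same indicator structure. The only presentational differences are that you work pointwise in the realised rival price and argue monotone comparative statics of the unique maximiser directly where the paper invokes the Milgrom--Roberts supermodularity theorem, and you are more explicit than the paper about why the certainty-equivalent prices can be treated as $P_i$-free at the relevant margins.
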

\begin{proof}[Proof of Lemma~\ref{lem:pure}]
Using~(\ref{D}), firm $i$'s profit has the derivative 
\begin{align}
\label{foc}
 \frac{\partial \pi_i(P_i,\sigma_j^*,\sigma_i^{\mathbb{E}},\sigma_j^{\mathbb{E}})}{\partial P_i} 
&=\int_{c_j}^{1}\int_0^1\left[1-\mu_iF_i\left(P_i +\max\set{0,v_{j}-\max\set{P_{jCE}+s,P_j^*}}\right)\right.
\\&\notag -\mu_jF_i\left(\max\set{P_i,P_{iCE}+s}+\max\set{0,v_{j}-P_{j}^*}\right)
\\&\notag -(P_i-c_i)\mu_if_i\left(P_i +\max\set{0,v_{j}-\max\set{P_{jCE}+s,P_j^*}}\right) 
\\&\notag  \left.-(P_i-c_i)\mu_j\text{\textbf{1}}_{\set{P_i>P_{iCE}+s}}f_i\left(P_i+\max\set{0, v_{j}-P_{j}^*}\right)\right]dF_j(v_j)d\sigma_j^*(P_j^*). 
\end{align}
and the second derivative 
\begin{align}
\label{soc}
\frac{\partial^2 \pi_i}{\partial P_i^2} 
&=-\left.\int_{c_j}^{1}\int_0^1 \right[ 2\mu_i f_i\left(P_i +\max\set{0,v_{j}-\max\set{P_{jCE}+s,P_j^*}}\right) 
\\&\notag +2\mu_j\text{\textbf{1}}_{\set{P_i>P_{iCE}+s}} f_i\left(P_i +\max\set{0,v_j-P_j^*} \right)  
\\&\notag  +(P_i-c_i)\mu_i \frac{\partial f_i\left(P_i +\max\set{0,v_{j}-\max\set{P_{jCE}+s,P_j^*}}\right)}{\partial P_i}
\\&\notag \left. +(P_i-c_i)\mu_j\text{\textbf{1}}_{\set{P_i>P_{iCE}+s}} \frac{\partial f_i\left(P_i +\max\set{0,v_{j}-P_j^*}\right)}{\partial P_i}\right]dF_j(v_j)d\sigma_j^*(P_j^*). 
\end{align} 
Sufficient for $\frac{\partial^2 \pi_i}{\partial P_i^2} <0\;\forall P_i\in(c_i,1)$ 
is $\int_0^1[(P_i-c_i)\frac{\partial f_i(P_i+w)}{\partial P_i} +2 f_i(P_i+w)]dF_j(v_j)\geq 0$ for all $w\in[0,1]$, which is ensured if $(P_i-c_i)\frac{\partial f_i(P_i+w)}{\partial P_i}\geq -2 f_i(P_i+w)$ for all $w\in[0,1-c_j]$. 
%
Therefore the best response (BR) of firm $i$ to any $\sigma_j^*,\sigma_i^{\mathbb{E}},\sigma_j^{\mathbb{E}}$ is pure and unique. 

Focus on pure strategies from now on, so $P_{jCE}=P_j$. By \cite{milgrom+roberts1990} Theorem 4, the game is supermodular if $\frac{\partial^2 \pi_i}{\partial P_i\partial P_j}\geq0$, in which case prices are strategic complements.  
The cross-partial derivative is  
\begin{align}
\label{crosspartial}
\frac{\partial^2 \pi_i}{\partial P_i\partial P_j}
&=\int_{P_j^*+s}^1\mu_if_i\left(P_i +v_{j}-P_j^*-s\right)dF_j(v_j)
\\&\notag +\int_{P_j^*}^1\mu_jf_i\left(\max\set{P_i,P_{iCE}+s}+v_{j}-P_{j}^*\right)dF_j(v_j)
\\&\notag +\int_{P_j^*+s}^1(P_i-c_i)\mu_i\frac{\partial f_i\left(P_i +v_{j}-P_j^*-s\right)}{\partial P_i} dF_j(v_j)
\\&\notag  +\int_{P_j^*}^1(P_i-c_i)\mu_j\text{\textbf{1}}_{\set{P_i>P_{iCE}+s}}\frac{\partial f_i\left(P_i+v_{j}-P_{j}^*\right)}{\partial P_i}dF_j(v_j)
\end{align} 
because $\frac{\partial f_i}{\partial P_j} =-\frac{\partial f_i}{\partial P_i}$. 
Sufficient for $\frac{\partial^2 \pi_i}{\partial P_i^2} <0 <\frac{\partial^2 \pi_i}{\partial P_i\partial P_j}$ is $\int_{0}^1[(P_i-c_i)\frac{\partial f_i(P_i+w)}{\partial P_i} + f_i(P_i+w)]dF_j(v_j)\geq 0$ for all $w\in[0,1]$, which is ensured if $(P_i-c_i)\frac{\partial f_i(P_i+w)}{\partial P_i}\geq - f_i(P_i+w)$ for all $w\in[0,1-c_j]$. 
%
\end{proof}
Given Lemma~\ref{lem:pure}, pure strategies are assumed from now on. The conditions in the lemma are far from necessary for pure equilibria---a unique pure best response to any strategy of the competitor is clearly much stronger than needed. Similarly, prices are strategic complements under weaker conditions, but these are more complicated. 

With strategic complementarities, the prices of the firms move together, the equilibria with the lowest and highest prices are stable, and all stable equilibria have the same comparative statics. 
A firm's profit increases in a rival's price, so firms impose positive externalities on each other by raising price. This implies that equilibria are Pareto ordered by price. The highest-price equilibrium is the natural focus of coordination if multiple equilibria exist. 

The next lemma shows that equilibrium is unique if, in addition to a weaker condition than in Lemma~\ref{lem:pure}, the consumer valuation pdf is weakly decreasing (e.g., uniform) and consumers are initially evenly distributed among firms. Firms may be asymmetric in other respects, e.g., in costs and how much consumers value their product. 
\begin{lemma}
\label{lem:unique} 
If $\mu_j=\mu_i$ and for each firm, $\frac{\partial f_i(P_i)}{\partial P_i}\leq 0$ and $(P_i-c_i)\frac{\partial f_i(P_i)}{\partial P_i}\geq -2f_i(P_i)$, then the equilibrium is unique. 
\end{lemma}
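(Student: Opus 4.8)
The plan is to turn ``equilibrium'' into a two-equation first-order system and show that system has a unique solution.

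\emph{Step 1: reduce to a pure, interior first-order system.} The hypotheses here are strictly weaker than in Lemma~\ref{lem:pure}, so I first re-establish a unique pure best response. Because $\frac{\partial f_i}{\partial P_i}\le 0$, for every $w\ge 0$
\[
(P_i-c_i)\frac{\partial f_i(P_i+w)}{\partial P_i}\ \ge\ (P_i+w-c_i)\frac{\partial f_i(P_i+w)}{\partial P_i}\ \ge\ -2f_i(P_i+w),
\]
where the first inequality uses $\frac{\partial f_i(P_i+w)}{\partial P_i}\le 0$ and $P_i+w-c_i\ge P_i-c_i$, and the second is the hypothesis applied at $P_i+w$ (trivially true once $P_i+w$ exceeds the support of $f_i$). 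But this is exactly the inequality that the proof of Lemma~\ref{lem:pure} shows implies $\frac{\partial^2\pi_i}{\partial P_i^2}<0$ against any $\sigma_j^*,\sigma_i^{\mathbb{E}},\sigma_j^{\mathbb{E}}$, so $\pi_i$ is strictly concave in $P_i$ and the best response is a unique pure price; since $\pi_i(c_i)=0<\max_{P_i}\pi_i$, it is interior. Hence every equilibrium is pure and interior, so $P_{iCE}=P_i$, and since $s>0$ the nested maxima and the indicator in~(\ref{foc}) collapse ($\max\set{P_{jCE}+s,P_j}=P_j+s$, $\max\set{P_i,P_{iCE}+s}=P_i+s$, $\mathbf{1}_{\set{P_i>P_{iCE}+s}}=0$). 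With $\mu_{\x}=\mu_{\y}=\tfrac12$ substituted, $(P_{\x},P_{\y})$ is an equilibrium iff $g_{\x}(P_{\x},P_{\y})=g_{\y}(P_{\x},P_{\y})=0$, where, writing $x^+:=\max\set{0,x}$ and $j$ for the firm other than $i$,
\[
g_i(P_{\x},P_{\y})=\int_0^1\! \left[\,2-F_i(P_i+(v_j-P_j-s)^+)-F_i(P_i+s+(v_j-P_j)^+)-(P_i-c_i)f_i(P_i+(v_j-P_j-s)^+)\,\right] f_j(v_j)\,dv_j .
\]
Here $g_i=2\,\partial\pi_i/\partial P_i$ with consumers' expectations set equal to the realized prices.

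\emph{Step 2: uniqueness, then existence.} I would prove that on $[c_{\x},1]\times[c_{\y},1]$ the Jacobian of $(g_{\x},g_{\y})$ has a strictly dominant negative diagonal, i.e. $\frac{\partial g_i}{\partial P_i}<0$ and $\bigl|\frac{\partial g_i}{\partial P_i}\bigr|>\bigl|\frac{\partial g_i}{\partial P_j}\bigr|$ for each $i$. Writing $w_s=(v_j-P_j-s)^+$ and $w_0=(v_j-P_j)^+$, one gets $\frac{\partial g_i}{\partial P_i}=-\int_0^1[\,2f_i(P_i+w_s)+f_i(P_i+s+w_0)+(P_i-c_i)f_i'(P_i+w_s)\,]f_j(v_j)\,dv_j<0$, because the upgraded bound of Step~1 gives $2f_i(P_i+w_s)+(P_i-c_i)f_i'(P_i+w_s)\ge 0$ and $f_i\ge 0$. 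For $\frac{\partial g_i}{\partial P_j}$, split $\int_0^1$ at $v_j=P_j$ and $v_j=P_j+s$: the integrand is $0$ on $\set{v_j\le P_j}$, equals $f_i(P_i+s+w_0)$ on $\set{P_j<v_j\le P_j+s}$, and equals $f_i(P_i+w_s)+f_i(P_i+s+w_0)+(P_i-c_i)f_i'(P_i+w_s)$ on $\set{v_j>P_j+s}$. Combining piecewise with $\frac{\partial g_i}{\partial P_i}$, the combinations $\frac{\partial g_i}{\partial P_i}\pm\frac{\partial g_i}{\partial P_j}$ reduce to expressions one checks are strictly negative using the upgraded bound and $f_i'\le 0$, which gives $\bigl|\frac{\partial g_i}{\partial P_i}\bigr|>\bigl|\frac{\partial g_i}{\partial P_j}\bigr|$. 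A differentiable map on a rectangle whose Jacobian is everywhere strictly diagonally dominant (hence a P-matrix) is injective (Gale--Nikaido), so there is at most one equilibrium. Existence is immediate: the best-response map is single-valued by Step~1 and, by the maximum theorem, continuous on the compact convex rectangle, so Brouwer gives a fixed point; hence the equilibrium is unique.

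\emph{The main obstacle.} It is the ``$-$'' direction of $\bigl|\frac{\partial g_i}{\partial P_i}\bigr|>\bigl|\frac{\partial g_i}{\partial P_j}\bigr|$ on $\set{v_j>P_j+s}$. There the integrand of $\frac{\partial g_i}{\partial P_i}-\frac{\partial g_i}{\partial P_j}$ equals $-3f_i(Q)-2f_i(Q+2s)-2(P_i-c_i)f_i'(Q)$ with $Q=P_i+v_j-P_j-s\ge P_i$, and the upgraded bound alone only gives that this is $\le f_i(Q)-2f_i(Q+2s)$, which is not obviously negative. Closing the gap needs a sharper use of the elasticity condition: integrating $-f_i'/f_i\le 2/(P-c_i)$ over $[Q,Q+2s]$ bounds how far $f_i(Q+2s)$ can fall below $f_i(Q)$, and it is here that $\mu_{\x}=\mu_{\y}$ (which keeps the two marginal densities entering $g_i$ on an equal footing) and the monotonicity $f_i'\le 0$ are both essential. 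Everything else --- the ``$+$'' direction, the other $v_j$-pieces, differentiability of $g_i$, and the boundary behaviour --- is routine bookkeeping.
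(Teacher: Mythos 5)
Your route is essentially the paper's: reduce to the equilibrium first-order system, and establish a diagonal-dominance/contraction condition whose key inequality is $\mu_j f_i(P_i+v_j-P_j+s)\le \mu_i f_i(P_i+v_j-P_j-s)$, delivered by $\mu_i=\mu_j$ and $f_i'\le 0$, together with $2f_i+(P_i-c_i)f_i'\ge 0$ for the own-price terms. The paper phrases this as ``best-response slopes below $1$,'' i.e.\ $\frac{\partial^2\pi_i^*}{\partial P_i\partial P_j}+\frac{\partial^2\pi_i^*}{\partial P_i^2}<0$, and verifies exactly your ``$+$'' direction; your Step 1 (upgrading the weaker elasticity bound to the Lemma~\ref{lem:pure} bound via $f_i'\le 0$ and $P_i+w-c_i\ge P_i-c_i$) is a clean piece of bookkeeping the paper leaves implicit, and your existence remark is a bonus the paper does not bother with. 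One small discrepancy: you substitute $P_{iCE}=P_i$ \emph{before} differentiating, which adds the term $-f_i(P_i+s+w_0)$ to $\partial g_i/\partial P_i$ relative to the paper's object (substitution after differentiation); this only makes your diagonal more negative, so it is harmless, but it is not literally the best-response slope.

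The genuine problem is the gap you flag yourself and do not close. As written, your proof is incomplete, and the sketched repair (integrating $-f_i'/f_i\le 2/(P-c_i)$ over $[Q,Q+2s]$) does not obviously work: that bound only gives $f_i(Q)/f_i(Q+2s)\le\bigl((Q+2s-c_i)/(Q-c_i)\bigr)^2$, which can exceed $2$, so the integrand $f_i(Q)-2f_i(Q+2s)$ need not be pointwise negative. But the ``$-$'' direction is an artifact of your insisting on two-sided row dominance $|\partial g_i/\partial P_i|>|\partial g_i/\partial P_j|$ with $\partial g_i/\partial P_j$ of unknown sign. The paper works in the regime where prices are strategic complements (maintained from Lemma~\ref{lem:pure} and the surrounding discussion), so $\partial^2\pi_i^*/\partial P_i\partial P_j\ge 0$, the absolute value is the cross-partial itself, and $\bigl|{-}J_{ij}/J_{ii}\bigr|<1$ reduces to the single inequality $J_{ii}+J_{ij}<0$ --- which you have already proved. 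With $J_{ii}<0\le J_{ij}$ and $J_{ii}+J_{ij}<0$ for both rows, $-J$ is a P-matrix and your Gale--Nikaido step goes through with no ``$-$'' direction at all. So either add strategic complementarity as a standing hypothesis (as the paper effectively does) and delete the obstacle paragraph, or accept that under the lemma's literal hypotheses alone the sign of the cross-partial, and hence your two-sided dominance, is not established and the proof as proposed does not close.
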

\begin{proof}[Proof of Lemma~\ref{lem:unique}]
Profit after imposing the equilibrium condition $P_i=P_{iCE}=P_i^*$ on both firms is denoted $\pi_i^*$. 
Sufficient for a unique equilibrium is 
that the slopes of best responses are below $1$. Formally, 
$\left|-\frac{\partial^2 \pi_i^*}{\partial P_i\partial P_j}\left/\frac{\partial^2 \pi_i^*}{\partial P_i^2}\right.\right| <1$ for each firm $i$. Equivalently, 
$\frac{\partial^2 \pi_i^*}{\partial P_i\partial P_j} +\frac{\partial^2 \pi_i^*}{\partial P_i^2}<0$. 

The derivatives of $\pi_i^*$ are obtained from~(\ref{soc}) and~(\ref{crosspartial}) by substituting $\text{\textbf{1}}_{\set{P_i>P_{iCE}+s}}=0$ and $\max\set{P_{jCE}+s,P_j^*} =P_j+s$. 
Then 
\begin{align*}
&\frac{\partial^2 \pi_i^*}{\partial P_i\partial P_j} +\frac{\partial^2 \pi_i^*}{\partial P_i^2} 
=\mu_i\int_{P_j}^{1} f_i(P_i+v_j-P_j-s)dF_j(v_j) +\mu_j\int_{P_j}^{1} f_i(P_i+v_j-P_j+s)dF_j(v_j) 
\\&\notag +\mu_i(P_i-c_i)\int_{P_j}^{1} \frac{\partial f_i(P_i+v_j-P_j-s)}{\partial P_i}dF_j(v_j) -2\mu_i\int_0^1 f_i\left(P_i +\max\set{0, v_{j}-P_{j}-s}\right)dF_j(v_j) 
\\&\notag -\mu_i(P_i-c_i)\int_0^1 \frac{\partial f_i\left(P_i +\max\set{0, v_{j}-P_{j}-s}\right)}{\partial P_i}dF_j(v_j) 
\\&\notag =-\mu_i\left[2f_i(P_i-s) +(P_i-c_i)\frac{\partial f_i(P_i-s)}{\partial P_i}\right]F_j(P_j) 
\\&\notag +\int_{P_j}^{1} [\mu_jf_i(P_i+v_j-P_j+s) -\mu_i f_i\left(P_i +v_{j}-P_{j}-s\right)]dF_j(v_j)
\end{align*} 
Sufficient for uniqueness is $(P_i-c_i)\frac{\partial f_i(P_i)}{\partial P_i}\geq -2f_i(P_i)$ and 
$\mu_j=\mu_i$ and $f_i' \leq 0$. 
\end{proof}



The main result establishes that if the firms have similar initial demands and the consumer valuation distribution does not vary too fast, then each firm's price decreases in the search cost of the consumers.  
Uniform valuations satisfy the condition, as does a truncated exponential distribution if search is not too costly. 
\begin{thm}
\label{thm:main}
If $\mu_if_i\left(P_i-s +w\right) +\mu_i(P_i-c_i)\frac{\partial f_i\left(P_i-s +w\right)}{\partial P_i} \leq \mu_jf_i\left(P_i+s +w\right)$ for all $w\in[0,1-s]$ and firms set pure prices which are strategic complements, 
then $\frac{dP_i^*}{ds} \leq 0$ for both firms in any stable equilibrium. 
\end{thm}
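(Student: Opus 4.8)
The plan is to read the equilibrium off the two first-order conditions and differentiate that system implicitly in $s$. In any pure-strategy profile with beliefs consistent with play, $P_{iCE}=P_i$ and $P_{jCE}=P_j$, so the indicator $\text{\textbf{1}}_{\set{P_i>P_{iCE}+s}}$ vanishes and $\max\set{P_{jCE}+s,P_j^*}=P_j+s$; hence~(\ref{foc}) collapses to a function $\Phi_i(P_i,P_j,s)$ whose home-demand terms carry the argument $P_i+\max\set{0,v_j-P_j-s}$ and whose switcher term carries $P_i+s+\max\set{0,v_j-P_j}$. A stable equilibrium (necessarily interior, prices at the boundary of $[c_i,1]$ never being best responses) solves $\Phi_\x=\Phi_\y=0$, so differentiating this pair in $s$ gives, for $i\in\set{\x,\y}$,
\begin{align*}
\frac{\partial \Phi_i}{\partial P_i}\,\frac{dP_i^*}{ds}+\frac{\partial \Phi_i}{\partial P_j}\,\frac{dP_j^*}{ds}=-\frac{\partial \Phi_i}{\partial s}.
\end{align*}

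Next I would sign the three ingredients. The diagonal coefficient $\partial\Phi_i/\partial P_i$ equals the own second derivative~(\ref{soc}) under these substitutions, plus the extra nonpositive term $-\mu_j\int f_i\bigl(P_i+s+\max\set{0,v_j-P_j}\bigr)\,dF_j(v_j)$ arising because, once $P_{iCE}=P_i$, the switcher threshold too contains $P_i$; at a stable equilibrium profit is a strict local maximum in the own price, so $\partial\Phi_i/\partial P_i<0$. The off-diagonal coefficient $\partial\Phi_i/\partial P_j$ is the cross-partial~(\ref{crosspartial}), which is $\geq0$ since prices are strategic complements. Stability is exactly the statement that the product of the best-response slopes $(-\partial\Phi_i/\partial P_j)\big/(\partial\Phi_i/\partial P_i)$ is below one, equivalently that the determinant of the coefficient matrix is positive. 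With this sign pattern, Cramer's rule returns $dP_i^*/ds\leq0$ for both firms provided the right-hand sides satisfy $\partial\Phi_i/\partial s\leq0$.

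Establishing $\partial\Phi_i/\partial s\leq0$ is the crux and, I expect, the main obstacle; it is precisely what the displayed hypothesis is tailored to give. Differentiating $\Phi_i$ under the integral, with $\partial\max\set{0,v_j-P_j-s}/\partial s=-\text{\textbf{1}}_{\set{v_j>P_j+s}}$ (the kink at $v_j=P_j+s$ contributing nothing, by continuity), the constant term drops, the switcher term contributes $-\mu_jf_i\bigl(P_i+s+\max\set{0,v_j-P_j}\bigr)$, and the home terms contribute $\bigl[\mu_if_i(P_i+v_j-P_j-s)+\mu_i(P_i-c_i)\tfrac{\partial f_i(P_i+v_j-P_j-s)}{\partial P_i}\bigr]\text{\textbf{1}}_{\set{v_j>P_j+s}}$. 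Splitting the $v_j$-integral at $P_j+s$ and pairing the home contribution with the part of the switcher contribution over the same range, the integrand there equals, with $w:=v_j-P_j$,
\begin{align*}
\mu_if_i(P_i-s+w)+\mu_i(P_i-c_i)\tfrac{\partial f_i(P_i-s+w)}{\partial P_i}-\mu_jf_i(P_i+s+w),
\end{align*}
which the hypothesis makes $\leq0$; the remaining piece, $-\mu_jf_i\bigl(P_i+s+\max\set{0,v_j-P_j}\bigr)$ integrated over $v_j\leq P_j+s$, is $\leq0$ on its own. Hence $\partial\Phi_i/\partial s\leq0$ and the conclusion follows. The two points I would be careful about are (i) that the range of $w$ produced by the pairing stays inside $[0,1-s]$, which uses that $f_i$ vanishes beyond the top of its support so the comparison bites only where all three arguments are in the support, and (ii) reading $\tfrac{\partial f_i}{\partial P_i}$ correctly at densities with kinks --- this is what "uniform valuations satisfy the condition" and "truncated exponential if search is not too costly" refer to.
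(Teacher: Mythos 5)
Your proof follows the paper's argument exactly: impose the pure-strategy equilibrium condition $P_{iCE}=P_i$ on the first-order condition, show its derivative in $s$ is nonpositive under the stated hypothesis (via the same split of the $v_j$-integral at $P_j+s$ that yields the paper's equation~(\ref{dfocds})), and conclude by implicit differentiation together with the second-order condition, strategic complementarity and stability. The only differences are presentational: you spell out the two-equation Cramer's-rule step where the paper compresses it into a one-line appeal to the Implicit Function Theorem plus strategic complements, and you flag the support caveat on the range of $w$ that the paper leaves implicit.
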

\begin{proof}[Proof of Theorem~\ref{thm:main}]
With pure prices, the FOC of firm $i$ in~(\ref{foc}) after imposing the equilibrium condition $P_i=P_{iCE}=P_i^*$ for each firm is 
\begin{align}
\label{focstar}
&FOC_i^* 
=\int_0^1\left[1-\mu_iF_i\left(P_i +\max\set{0, v_{j}-P_{j}-s}\right)\right.
\\&\notag -\mu_jF_i\left(P_i+s+\max\set{0,v_{j}-P_{j}}\right) \left.-(P_i-c_i)\mu_if_i\left(P_i +\max\set{0, v_{j}-P_{j}-s}\right)\right]dF_j(v_j). 
\end{align}
Its derivative w.r.t.\ $s$ is 
\begin{align}
\label{dfocds}
&\notag \frac{\partial FOC_i^*}{\partial s} =\mu_i\int_{P_j+s}^1\left[f_i\left(P_i +v_{j}-P_{j}-s\right) +(P_i-c_i)\frac{\partial f_i\left(P_i +v_{j}-P_{j}-s\right)}{\partial P_i} \right]dF_j(v_j) 
\\& -\mu_j\int_0^1f_i\left(P_i+s+\max\set{0,v_{j}-P_{j}}\right)dF_j(v_j),
\end{align} 
negative if 
$\mu_if_i\left(P_i-s +w\right) +\mu_i(P_i-c_i)\frac{\partial f_i\left(P_i-s +w\right)}{\partial P_i} \leq \mu_jf_i\left(P_i+s +w\right)$ for all $w\in[0,1-s]$. 
For specific distributions,~(\ref{dfocds}) can be calculated explicitly. Sufficient for $\frac{\partial FOC_i^*}{\partial s}<0$ is that
$f_i$ is uniform and $\mu_i(1-P_j-s)\leq \mu_j(1-P_j)$ (with uniform distributions, $P_j\geq\frac{1}{2\sqrt{2}}$ for any $s>0$ and $c_j\geq0$), or
$f_i$ is truncated exponential and $\mu_i(1 -P_i+c_i)\exp\left(-P_i -1+P_{j}+s\right) \leq \mu_j\exp\left(-P_i-1+P_{j}-s\right)$. 

By the Implicit Function Theorem, $\frac{\partial P_i^*}{\partial s} =-\frac{\partial FOC_i^*}{\partial s}\left/\frac{\partial FOC_i^*}{\partial P_i}\right.$. Sufficient for the main result $\frac{dP_i^*}{ds}<0$ is $\frac{\partial FOC_i^*}{\partial s} <0$, because the SOC implies $\frac{\partial FOC_i^*}{\partial P_i} <0$ and prices are strategic complements by Lemma~\ref{lem:pure}. 
\end{proof}

The intuition for the main result is that the fraction of switchers among a firm's customers falls in the search cost. 
The switchers can be held up, because they are willing to pay the price plus the search cost, thus will all still buy if the firm's chosen price exceeds the expected price by less than the search cost. The hold-up motive increases a firm's optimal price. Greater search cost decreases the hold-up motive, thus the price. When the search cost becomes so large that no consumers switch, each firm's price falls to its monopoly level. 
The monopoly price is with respect to the remaining demand at the large search cost when low-valuation customers have exited. This demand is smaller than at lower search cost and contains relatively more high-valuation customers. Therefore the monopoly price at the remaining demand is greater than for a joint owner of the firms at a smaller search cost. 

\begin{figure}
\caption{Demands after an increase in the search cost from $0.1$ to $0.2$ at $P_{\x}=0.6$, $P_{\y}=0.45$.}
\label{fig:srises}
\includegraphics[width=\linewidth]{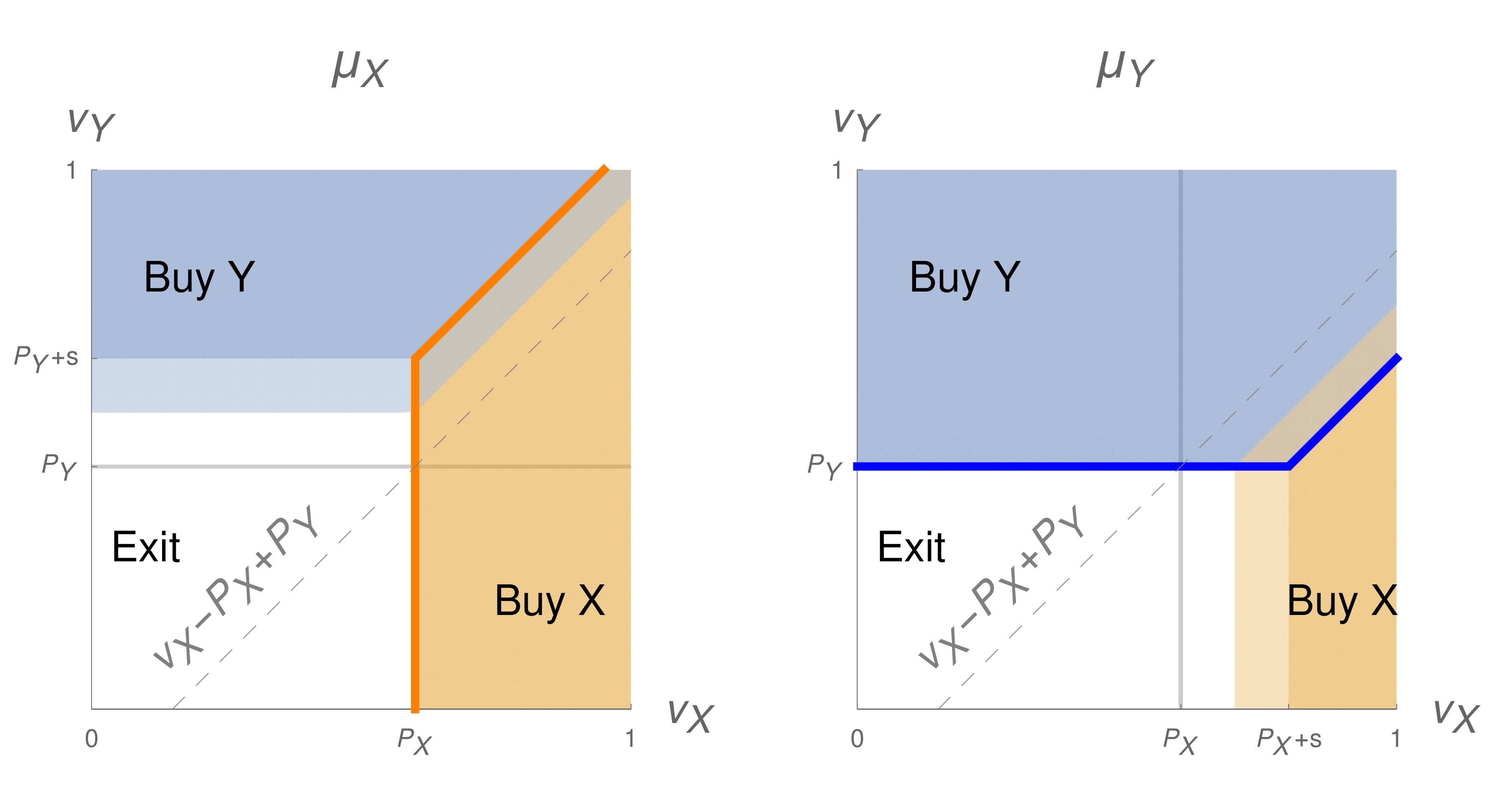} 
\end{figure}
Figure~\ref{fig:srises} shows the effect of a greater search cost on the demands, fixing the prices. In the left panel (consumers initially at firm $\x$), the light blue rectangle denotes the consumers who stop buying from $\y$ and exit when $s$ increases. The bluish diagonal band below the orange line are consumers who start buying from $\x$ instead of $\y$. In the right panel, the orange rectangle consists of consumers who stop buying from $\x$ and exit, while the orange diagonal band below the blue line depicts those who stay with $\y$ instead of switching to $\x$. Each firm loses some switchers who could be held up and gains some demand from consumers initially at itself who respond to price increases. 

This concludes the discussion of how the search cost affects prices. 
The following subsection examines the change of profits, welfare and consumer surplus in the search cost, as well as how prices respond to production costs and initial demands.

\subsection{Other comparative statics}

Exit increases in the search cost, so total trading surplus in the market falls. Costlier search also makes the final allocation of buyers to firms less efficient (ideally, consumers above the diagonal would buy from firm $\y$ and below the diagonal from $\x$). Therefore welfare and profits decrease in the search cost. The effect on consumer surplus could have either sign, because both prices and allocative efficiency decrease.

The comparative statics in production costs $c_i$ are intuitive: higher costs raise prices. The FOC~(\ref{foc}) increases in $c_i$, so by \cite{milgrom+roberts1990} Theorem~6, $P_i$ increases in $c_i$. Strategic complementarities then imply that all prices rise in any firm's cost. 

The effect of greater asymmetry of initial demands (higher $|\mu_{\x}-\mu_{\y}|$) on prices is ambiguous, because the best response curves of the firms move in opposite directions when $\mu_{\x}-\mu_{\y}$ increases. 
Each firm's best response curve may in general rise or fall in the amount of asymmetry. 
Strategic complementarities mean that both best responses are increasing in the rival firm's price. The intersection of two increasing functions may move in any direction when one function pointwise increases and the other decreases. 
When the firms are symmetric, the best response of firm $i$ decreases in $\mu_{i}-\mu_{j}$, which rules out a simultaneous increase in $P_i$ and decrease in $P_j$. The prices may jointly increase or decrease, or a fall in $P_i$ may co-occur with an increase in $P_j$. 

Suppose the initial demands are maximally asymmetric---all consumers are initially at one firm (the incumbent). Then the unique equilibrium prices are such that the incumbent remains a monopolist. The consumers expect a high enough price from the other firm (the entrant) that learning is not optimal. Any price cut by the entrant is not observed by consumers, so they cannot start learning in response to it. Suppose consumers expected the entrant to set a low enough price to make learning worthwhile at some valuations. Then the entrant would hold up all arriving switchers by choosing a price greater than they expected, for any expected price and any positive search cost. This contradicts consumers learning the entrant's price. 

Modifications of the baseline model are considered next. The results remain robust to a distribution of search costs, unattached consumers or many firms, and are continuous in the correlation of valuations.

\section{Extensions and generalisations}
\label{sec:extensions}

A distribution of search costs bounded away from zero and independent of the valuations yields the same results as the mean search cost, because consumers are risk neutral. With unchanged consumer decisions, the profit-maximising prices remain the same. The comparative statics use the mean of the search cost distribution and are otherwise unchanged. 

If some consumers have zero search cost and some positive, then firms mix over an interval of prices. Firms want to undercut each other to attract zero-cost buyers, but if prices are cut low enough, then prefer to charge a high price and sell just to their initial positive-cost customers. The hold-up motive is still present, due to the positive-cost buyers, and decreases in the search cost, so the direction of the comparative statics remains the same. 

With zero search cost for all buyers, prices are discretely lower than with a small positive search cost, because the mass of inframarginal consumers is continuous in $s$ even at $s=0$, but the mass of marginal consumers approximately doubles at $s=0$ compared to a small positive $s$.  Hold-up is impossible if consumers can costlessly switch firms. The discontinuity in prices at zero search cost is similar to the Diamond paradox. 

If some customers are initially at neither firm and have to pay the search cost no matter which one they go to, then the hold-up motive is strengthened for both firms. The price decrease in the search cost becomes larger.

Correlated valuations of consumers may change the results, depending on the joint distribution of the valuations. The only modification in the proofs is replacing $F_i$ in all formulas by $F_{i|j}(\cdot|v_j)$. 
These modified sufficient conditions may be harder or easier to satisfy than the original assumptions, depending on the joint distribution of the valuations. 
If $v_{\x}$ and $v_{\y}$ are perfectly positively correlated, then the model with $s=0$ is Bertrand competition, and with $s>0$, the original \cite{diamond1971} paradox, where prices stay constant in $s$. 

Perfect negative correlation of $v_{\x}$ and $v_{\y}$ reduces the environment with $s=0$ to the Hotelling model. 
With $s>0$ and $v_{\y} =1-v_{\x}$, firm $i$'s FOC is 
$\frac{\mu_i}{\sqrt{2}}(P_{j^*}+s-P_i) +\frac{\mu_j}{\sqrt{2}}(P_j^*-P_i^*-s)-(P_i-c_i)\left[\frac{\mu_i}{\sqrt{2}} +\frac{\mu_j}{\sqrt{2}}\text{\textbf{1}}\set{P_i>P_{i}^*+s}\right] =0$. 
Equilibrium prices are 
$P_{i} =\frac{\mu_j(\mu_i-\mu_j)s +(1+\mu_j)\mu_ic_i +\mu_jc_j}{(1+\mu_i )(1+\mu_j)-1}$.
The marginal consumers are independent of $s$ as long as $P_i<P_{i}^*+s$. The inframarginal buyers change by $\frac{\mu_i-\mu_j}{\sqrt{2}}$ when $s$ increases by a unit. Firms with symmetric initial demands thus charge the same prices for any $s>0$. A firm with more initial customers raises its price in $s$, but the smaller firm decreases (and faster). With equal costs $c_i=c_j$, the demand-weighted average price $\mu_i P_i +\mu_j P_j$ stays constant in $s$. 

Consumers uniformly distributed on a + symbol in the demand squares result in both firms charging a constant price for all $s\geq0$. Thus each firm's competitive and monopoly price are equal. The price of firm $\y$ is equal to the distance from the x-axis to the centre of the +, and symmetrically for firm $\x$. An example of such a distribution is two crossing streets with consumers living along their length. 

Many firms are conceptually similar to duopoly---in each firm's FOC, replace the rival firm with the combination of all rivals. 
The incentives of firm $i$ are the same as when facing a single competitor which has initial demand $\sum_{j\neq i}\mu_j$ and offers consumers the net value $\max_{j\neq i}\set{v_{j}-P_{j}^*}$ distributed according to $\prod_{j\neq i}F_j(\cdot +P_j^*)$. Equilibrium prices in an oligopoly are of course lower than in a duopoly in which all rivals are controlled by a single owner. However, the comparative statics retain their direction, because the FOC of each firm still decreases in $s$ for a range of parameters. 

%

If consumers initially at firm $i$ do not know their valuation $v_j$ for the rival firm, but can learn $v_j$ and $P_j$ together, then valuation distributions close to uniform result in intuitive comparative statics---prices increase in the search cost.\footnote{If $f_i=1$, $c_i=0$ and $\mu_i=\frac{1}{2}$, then the monopoly prices at large $s$ are $P_i=\frac{1}{2}$ and the competitive prices at $s=0$ are $P_i \approx 0.414$.
} 
However, for symmetric firms, a sufficiently large decrease in the valuation pdf at a point $v^*$ above the monopoly price implies that prices decrease in the search cost over some range of $s$. 
A numerical example has $\mu_{i}=\frac{1}{2}$, $c_{i}=0$ and $f_{i}(v_i)=\begin{cases}
\frac{3}{2} & \text{if } v_i\in[0,\frac{1}{2}], \\
\frac{1}{2} & \text{if } v_i\in(\frac{1}{2},1], \\
\end{cases}$ for both firms. Thus $v^*=\frac{1}{2}$. The equilibrium prices at $s=0$ are approximately $0.31$, and the monopoly price as $s$ becomes large is $0.25$. 
As $s$ increases from $0.13$ to $0.19$, prices decrease linearly from $0.491$ to $0.384$. 

At large $s$, the environments with known and unknown valuation for the other firm are identical because no consumers switch. At $s=0$, these models are also identical, because it is weakly dominant for all consumers to search. If consumers know their valuations, then prices jump up when the search cost becomes positive, but if the valuation for the other firm is unknown, then prices are continuous at costless search. Thus for low positive search costs, consumers obtain greater utility when they do not know their valuation. Firms correspondingly make lower profits.

The following section concludes with a discussion of the predictions and policy implications from the main model.

\section{Discussion}
\label{sec:discussion}

When prices and profits decrease in the search cost, industry associations naturally want to provide information that helps customers compare members. An online directory achieves this, which justifies the cost of creating and maintaining the member database. Notably, such searchable directories do not provide price comparisons, even though these would be easy to add. A simple explanation why not is that reducing the search cost to zero by making prices transparent would discretely decrease prices and profits relative to a small positive cost. 

At low positive search costs, prices and profits are discretely higher when consumers know their valuation for each firm before the learning decision than when they learn the valuation together with the price. This is an additional motive for industry groups to inform consumers in detail about the goods and services each member provides. 

For a large enough search cost, each firm is a monopolist over its initial customers, which would be a reason for high prices, especially when the exit of low-valuation buyers increases the average willingness to pay among the remaining ones. However, the exit of many consumers (who are inefficiently allocated to a firm which they value little) reduces total surplus enough to outweigh the larger share of surplus that a monopolist can obtain using its market power. Therefore firms prefer a more efficient allocation even if it means more competition. 

As Adam Smith already noted, industry associations tend to collude to increase the profits of their members at the expense of consumers. A regulator maximising consumer surplus prefers either zero search cost, or if this is unattainable, then maximal cost. Prohibiting information release by an association is difficult, so the regulator should instead provide price comparisons directly. Examples are a government-run health insurance exchange and a government website listing pension funds ordered by their total fee loading. Of course, the industry can counter by obfuscating prices with hidden add-on costs and private discounts. 

A regulator maximising total surplus unambiguously prefers a lower search cost. At small positive search costs, both kinds of regulator prefer that consumers do not know their valuation for the rival firm. However, providing price information to consumers dominates removing their valuation information even if the latter was possible.


\bibliographystyle{ecta}
\bibliography{teooriaPaberid} 
\end{document}